\newtheorem{theorem}{Theorem}
\newtheorem{lemma}{Lemma}
\begin{document}


\title{Efficient quantum repeater in perspectives of both entanglement concentration rate and LOCC complexity}

\author{Zhaofeng Su}
\affiliation{Centre for Quantum Software and Information, University of Technology Sydney, Ultimo NSW 2007, Australia.} 
\author{Ji Guan}
\affiliation{Centre for Quantum Software and Information, University of Technology Sydney, Ultimo NSW 2007, Australia.}

\author{Lvzhou Li}%
\affiliation{Institute of Computer Science Theory, School of Data and Computer Science, Sun Yat-Sen University, Guangzhou, 510006, China}

\date{\today}

\begin{abstract}
  Quantum entanglement is an indispensable resource for many significant quantum information processing tasks. However, because of the noise in quantum channels, it is difficult to distribute quantum entanglement over a long distance in practice. A solution for this challenge is the quantum repeater which can extend the distance of entanglement distribution. In this scheme, the time consumption of classical communication and local operations takes an important place in perspective of time efficiency. Motivated by this observation, we exploit the basic quantum repeater scheme in perspectives of not only the optimal rate of entanglement concentration but also the complexity of local operations and classical communication. Firstly, we consider the case where two two-qubit pure states are prepared. We construct a protocol with the optimal entanglement concentration rate and less consumption of local operations and classical communication. We also find a criteria for the projective measurements to achieve the optimal probability. Secondly, we exploit the case where two general pure states are prepared and general measurements are considered. We get an upper bound on the probability for a successful measurement operation to produce a maximally entangled state without any further local operations.

\end{abstract}

\maketitle

\section{Introduction}

In the last three decades, quantum computation and quantum information has become one of the most active research fields. Many significant quantum information processing protocols have been proposed~\cite{NL00}. Remarkable progress has been achieved in both theoretical and experimental aspects. For example, quantum teleportation for sending an unknown quantum bit~\cite{CB93}, quantum key distribution for quantum cryptography~\cite{PS00}, and quantum dense coding for communicating two bits by sending only one qubit which is the inverse of quantum teleportation~\cite{CB92,KM96}. As the most basic and counterintuitive characteristics of quantum mechanics, quantum entanglement plays an indispensable role in all the above applications.

The first step towards the implementation of these applications is to distribute quantum entanglement over remotely located participants. However, there are two bottlenecks for directly sending quantum states over long distance~\cite{NC11}. On one hand, the probability of absorption when transmitting a photon increases exponentially with the distance. For example, the 1 km long fiber has a transmission of $95\%$ while the rate of 1000km fiber is $10^{-10}$ Hz which means transmitting a photon successfully every 300 years.  On the other hand, even when a photon arrives at the destination, the fidelity of the transmitted state decreases exponentially with the distance because of the noise in quantum channels.

A solution for this problem is to use quantum repeaters to divide the long distance into many shorter segments. Each of the segments has tolerable probability for absorption and noise. The first protocol of quantum repeaters was proposed by Briegel~\cite{HW98}, and experimental progresses have been made since then~\cite{ZT03, AW03}.

In the basic scenario of quantum repeater, two copies of a bipartite quantum state $\rho$ are shared by three participants, say Alice, Clare and Bob. Alice and Clare share the copy $\rho_{AC}$, and Clare and Bob share $\rho_{CB}$.
By performing local operations and classical communication (LOCC) between these three parties, quantum entanglement can be created between Alice and Bob, thereby extending the distance of entanglement distribution. The basic scenario is depicted in Fig~\ref{fig:Scenario1}.
\begin{figure}[h]
   \centering
   \includegraphics[width=0.5\textwidth]{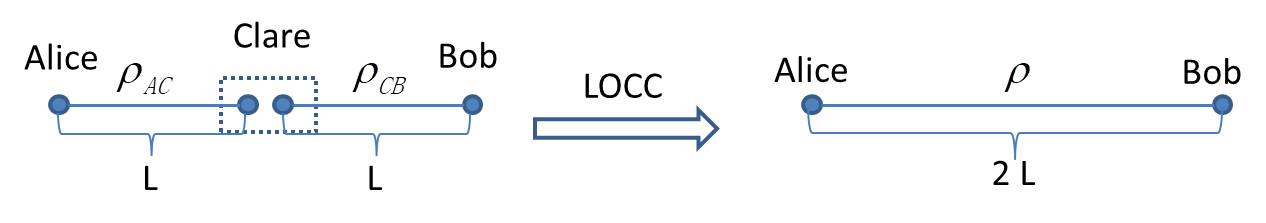}
   \caption{Basic scenario of quantum repeaters.} \label{fig:Scenario1}
\end{figure}

This basic scenario is also known as quantum entanglement swapping~\cite{ZM93}. Bose et al. considered the case where the same two-qubit pure states are prepared in each segments~\cite{SV98}. They found the optimal strategy of quantum swapping in the perspective of concentrating the most entanglement between Alice and Bob. Shi et al. considered the case where different two-qubit pure states are prepared~\cite{BY00}. In their strategy, Clare performs a projective measurement in the standard Bell basis. Then, Alice and Bob perform local operations to create maximally entangled state between them. Shi et al. found out that the optimal entanglement concentration rate is exactly the same as the concentration rate of the scenario where the less entangled one resource is directly distributed between Alice and Bob. Hardy and Song considered the case where general entangled pure states are distributed in a chain scenario~\cite{LD00}. Several groups have experimentally demonstrated entanglement swapping~\cite{CT09, XC16}.

In the basic quantum repeater scheme, the protocol consists of preparation of quantum resources and LOCC. In practice, quantum resources can be efficiently transmitted over a short distance~\cite{NC11}. Once the quantum resources are prepared, entanglement could be concentrated from the scenario by applying LOCC. Therefore, it is of practical significance to consider the LOCC complexity in the protocol.

In this work, we exploit the basic quantum repeater scheme in perspectives of both entanglement concentration rate and LOCC complexity. Firstly, we exploit the scenario where two different two-qubit pure states are prepared. We construct a protocol which can create entanglement between Alice and Bob with the optimal rate and less consumption of LOCC resources. We also find a criteria for the projective measurements which are able to achieve the optimal rate of entanglement concentration. Secondly, we exploit the scenario where two different higher dimensional pure states are prepared and general measurements are considered. We get an upper bound on the probability for a successful measurement operation to produce a maximally entangled state between Alice and Bob without any further local operations.

\section{Taking different two-qubit pure states as resource}
In this section, we consider the case where two different pure states of a two-qubit system are prepared in the basic quantum repeater configuration. We construct a protocol to concentrate maximally entangled two-qubit state between Alice and Bob with the optimal entanglement concentration rate and less LOCC complexity.

Any pure state of a two-qubit system can be expressed as $|\Phi_{\theta}\rangle \equiv \cos\theta |00\rangle + \sin\theta |11\rangle, \theta\in[0,\frac{\pi}{2}]$ up to a local unitary on either party. Note that local unitary operations cannot affect quantum entanglement which is a kind of resource that could not be created or increased with LOCC. Thus, we only need to consider the case that the entangled states $|\Phi_{\theta}\rangle$ and $|\Phi_{\eta}\rangle$ with $\theta, \eta\in (0, \frac{\pi}{4}]$ are initially distributed in the configuration which is depicted in Fig.~\ref{fig:Scenario2QBPure}.
\begin{figure}[h]
   \centering
   \includegraphics[width=0.5\textwidth]{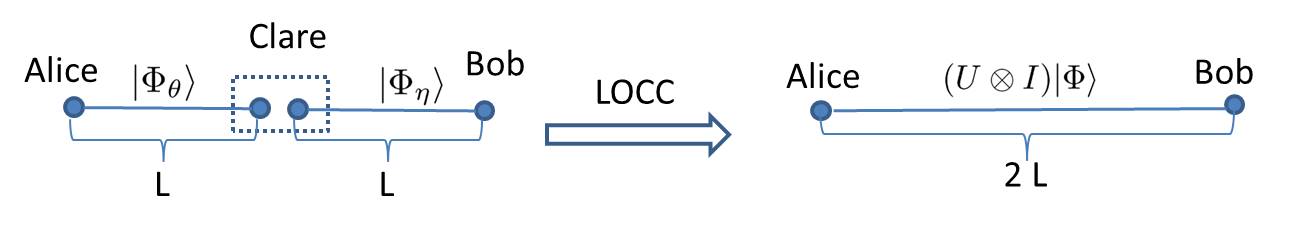}
   \caption{Quantum repeater scheme with different two-qubit pure states.} \label{fig:Scenario2QBPure}
\end{figure}

\begin{lemma}\label{lm:ETtransformProb}
   Suppose two parties share a two-qubit state $|\Phi_{\lambda}\rangle = \cos{\lambda} |00\rangle + \sin{\lambda} |11\rangle$ with $\lambda\in(0, \frac{\pi}{2})$. By performing LOCC, the state can be probabilistically transferred into a maximally entangled state. The probability of the successful transformation is upper bounded by $P_{E}(\Phi_{\lambda}) \equiv \min\{2\cos^{2}\lambda, 2\sin^{2}\lambda\}$.
\end{lemma}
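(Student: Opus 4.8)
The plan is to reduce an arbitrary LOCC protocol to a normal form and then read off the constraint from Alice's reduced density matrix. By the Lo--Popescu reduction for pure bipartite states, any LOCC protocol attempting the transformation $|\Phi_{\lambda}\rangle \to$ (two-qubit maximally entangled state) can, without decreasing its success probability, be replaced by a one-way protocol in which Alice performs a single generalized measurement $\{M_{k}\}$ with $\sum_{k} M_{k}^{\dagger}M_{k} = I$ and Bob applies a local unitary conditioned on the classical outcome $k$. Since local unitaries neither create nor destroy maximal entanglement, it suffices to bound $\sum_{k\in S} p_{k}$, where $S$ is the set of outcomes for which $(M_{k}\otimes I)|\Phi_{\lambda}\rangle$ is, up to a local unitary, maximally entangled, and $p_{k} = \|(M_{k}\otimes I)|\Phi_{\lambda}\rangle\|^{2}$. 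By the symmetry exchanging the roles of $\cos$ and $\sin$ (relabeling the basis on one side) we may also assume $\lambda\in(0,\tfrac{\pi}{4}]$, so that the target is the bound $2\sin^{2}\lambda$.

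Next I would translate the ``maximally entangled'' requirement into a condition on reduced states. The reduced state of $|\Phi_{\lambda}\rangle$ on Alice's qubit is $\rho_{A} = \cos^{2}\lambda\,|0\rangle\langle 0| + \sin^{2}\lambda\,|1\rangle\langle 1|$, which is positive definite since $\lambda\in(0,\tfrac{\pi}{2})$. A two-qubit pure state is maximally entangled if and only if its one-party reduced state equals $\tfrac12 I$; hence for every $k\in S$ we need $M_{k}\rho_{A}M_{k}^{\dagger}/p_{k} = \tfrac12 I$, i.e.\ $M_{k}\rho_{A}M_{k}^{\dagger} = \tfrac{p_{k}}{2} I$. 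Setting $M_{k}\rho_{A}^{1/2} = \sqrt{p_{k}/2}\,V_{k}$ forces $V_{k}V_{k}^{\dagger} = I$, so $V_{k}$ is unitary on the qubit (in particular $M_{k}$ is a $2\times 2$ matrix, which is itself forced since the post-measurement state must still be a qubit), and therefore $M_{k}^{\dagger}M_{k} = \tfrac{p_{k}}{2}\,\rho_{A}^{-1}$.

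Finally I would sum this identity over $S$ and invoke completeness: $\sum_{k\in S} M_{k}^{\dagger}M_{k} \le \sum_{k} M_{k}^{\dagger}M_{k} = I$, hence $\bigl(\sum_{k\in S}\tfrac{p_{k}}{2}\bigr)\rho_{A}^{-1} \le I$. As $\rho_{A}^{-1}$ is positive definite, inverting this operator inequality gives $\rho_{A} \ge \tfrac12\bigl(\sum_{k\in S}p_{k}\bigr) I$, and comparing smallest eigenvalues yields $\sin^{2}\lambda \ge \tfrac12\sum_{k\in S}p_{k}$, i.e.\ the success probability is at most $2\sin^{2}\lambda = \min\{2\cos^{2}\lambda,\,2\sin^{2}\lambda\}$. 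The main obstacle is the very first step: rigorously ruling out that an adaptive, multi-round, two-way LOCC protocol could beat this one-shot single-measurement form. I would handle it either by appealing to the Lo--Popescu theorem, or, alternatively, by bypassing the reduction altogether and using Vidal's theorem, observing that the relevant monotone $E_{2}$ is exactly the smallest squared Schmidt coefficient and is nonincreasing on average under LOCC, which directly gives $\tfrac12\cdot(\text{success probability}) \le \min\{\cos^{2}\lambda,\sin^{2}\lambda\}$.
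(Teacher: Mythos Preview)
Your proof is correct. The paper itself does not give an independent proof of this lemma; it simply states that Lemma~\ref{lm:ETtransformProb} ``is an implication of Vidal's result'' and cites \cite{GV99}. Your closing alternative (invoke Vidal's monotone $E_{2}$, the smallest squared Schmidt coefficient, and use its nonincrease on average under LOCC) is therefore exactly the paper's route.

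Your main argument is a genuinely different, self-contained proof. Instead of appealing to the full Vidal machinery, you reduce via Lo--Popescu to a single generalized measurement on Alice, translate ``successful outcome'' into the operator identity $M_{k}^{\dagger}M_{k}=\tfrac{p_{k}}{2}\rho_{A}^{-1}$, and read the bound off from $\sum_{k\in S}M_{k}^{\dagger}M_{k}\le I$ together with operator-monotonicity of inversion. This is more elementary and more transparent for the two-qubit case: it isolates precisely which algebraic fact forces the bound. One small point worth tightening: if you allow $M_{k}$ to map into a larger ancilla space, the post-measurement reduced state on Alice is $\tfrac12$ times a rank-two projector rather than $\tfrac12 I_{d}$; the clean fix is either to compose $M_{k}$ with a projection onto its two-dimensional range (harmless for the success event), or to argue directly from $\rho_{A}^{1/2}M_{k}^{\dagger}M_{k}\rho_{A}^{1/2}=\tfrac{p_{k}}{2}I_{2}$, which holds regardless of the output dimension and still yields $M_{k}^{\dagger}M_{k}=\tfrac{p_{k}}{2}\rho_{A}^{-1}$.
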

Lemma~\ref{lm:ETtransformProb}, which is an implication of Vidal's result~\cite{GV99}, gives the upper bound on the entanglement concentration rate of the scenario where an entangled two-qubit state $|\Phi_{\lambda}\rangle$ is shared by two parties. With out loss of generality, suppose $\cos{\lambda} \ge \sin{\lambda}$. The upper bound concentration rate could be achieved by performing a general measurement $\{M_{0}, M_{1}\}$, where $M_{0} = \tan{\lambda} |0\rangle\langle 0| + |1\rangle\langle 1|$ and $M_{1} = \sqrt{1 - \tan^{2}{\lambda}}|0\rangle\langle 0|$, on either party. If the measurement outcome $0$ was observed, the maximally entangled state would be created between them. The corresponding probability is $P_{E}(\Phi_{\lambda}) = 2\sin^{2}{\lambda}$.

By applying Lemma~\ref{lm:ETtransformProb}, we get the upper bound on the entanglement concentration rate for the simplest quantum repeater scenario depicted in Fig.~\ref{fig:Scenario2QBPure}, which is concluded in the following lemma.

\begin{lemma}\label{lm:optConcentrationRate}
   Suppose that the two-qubit states $|\Phi_{\theta}\rangle$ and $|\Phi_{\eta}\rangle$ are initially distributed in the scenario depicted in Fig.~\ref{fig:Scenario2QBPure}. Let $P_{MS}$ be the optimal probability that maximally entangled state can created between Alice and Bob by applying LOCC. Then, $P_{MS} \le \min\{ P_{E}(\Phi_{\theta}), P_{E}(\Phi_{\eta}) \}$.
\end{lemma}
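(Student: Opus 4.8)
The plan is to bound $P_{MS}$ by reducing the tripartite problem to two instances of the bipartite problem already settled in Lemma~\ref{lm:ETtransformProb}, via monotonicity of entanglement across a suitable cut. First I would note that any LOCC protocol among the three parties Alice, Clare and Bob is, in particular, an LOCC protocol between the two groups $A$ and $CB$: every local operation performed by Clare or by Bob is a local operation on the $CB$ side, and broadcasting classical messages among three parties is a special case of classical communication between these two groups. Keeping a full record of all classical outcomes, each ``success'' branch of the protocol ends in a pure tripartite state $|\psi\rangle_{ABC}$ whose reduced state $\rho_{AB}$ is maximally entangled; since $\dim\mathcal{H}_{A}=2$ this forces $\rho_{A}$ to be maximally mixed, so $|\psi\rangle_{ABC}$ is itself maximally entangled across the $A|CB$ cut with Schmidt coefficients $\tfrac12,\tfrac12$. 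Hence $P_{MS}$ is at most the optimal probability, over all LOCC between $A$ and $CB$, of converting the resource into a state maximally entangled across $A|CB$.

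Next I would identify the resource across the $A|CB$ cut. Since the factor $|\Phi_{\eta}\rangle$ sits entirely on Clare's and Bob's side of this cut, attaching it leaves the Schmidt spectrum of $|\Phi_{\theta}\rangle_{AC}$ unchanged: Clare's qubit from the first copy supplies orthogonal flags $|0\rangle,|1\rangle$ that carry the entanglement, so $|\Phi_{\theta}\rangle_{AC}\otimes|\Phi_{\eta}\rangle_{CB}$ has Schmidt coefficients $\cos^{2}\theta$ and $\sin^{2}\theta$ across $A|CB$. Thus, across this cut, the resource is --- up to local unitaries and a local ancilla on the $CB$ side --- just the two-qubit state $|\Phi_{\theta}\rangle$, and Lemma~\ref{lm:ETtransformProb} (equivalently, Vidal's bound obtained from the entanglement monotone equal to the smaller Schmidt coefficient, whose ensemble average cannot increase under LOCC) shows that the probability of producing a state maximally entangled across $A|CB$ is at most $P_{E}(\Phi_{\theta})$. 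Combined with the previous paragraph, $P_{MS}\le P_{E}(\Phi_{\theta})$.

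Running the identical argument across the cut $B|AC$ --- using that $\dim\mathcal{H}_{B}=2$ and that the Schmidt coefficients of $|\Phi_{\theta}\rangle_{AC}\otimes|\Phi_{\eta}\rangle_{CB}$ across $B|AC$ are $\cos^{2}\eta$ and $\sin^{2}\eta$ --- yields $P_{MS}\le P_{E}(\Phi_{\eta})$. Putting the two inequalities together gives $P_{MS}\le\min\{P_{E}(\Phi_{\theta}),P_{E}(\Phi_{\eta})\}$, as claimed.

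The computations here are trivial; I expect the only real content to be the two ``lifting'' observations. The first is that passing from tripartite LOCC to bipartite LOCC across $A|CB$, and from ``Alice and Bob maximally entangled'' to ``Alice maximally entangled with Clare and Bob,'' is legitimate precisely because the end party $A$ (resp.\ $B$) is a single qubit, so one ebit across $A|B$ already saturates the $A|CB$ entanglement. The second is that Lemma~\ref{lm:ETtransformProb} has to be applied probabilistically --- i.e.\ to an ensemble of LOCC outcomes only some of which are the target state --- which is exactly what the monotone bound underlying Lemma~\ref{lm:ETtransformProb} supplies. Once these two points are granted, the stated bound follows immediately by symmetry.
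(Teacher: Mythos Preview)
Your proof is correct and follows essentially the same approach as the paper's: both arguments observe that tripartite LOCC is in particular bipartite LOCC across the cut $A|CB$ (the paper phrases this as ``let Bob locally prepare an ancilla $|\Phi_{\eta}\rangle$''), so that Lemma~\ref{lm:ETtransformProb} applied to the resource $|\Phi_{\theta}\rangle$ across that cut yields $P_{MS}\le P_{E}(\Phi_{\theta})$, and symmetry gives the other inequality. Your version is somewhat more explicit---you spell out why a successful branch is maximally entangled across $A|CB$ and why the Schmidt spectrum of the resource across that cut is $(\cos^{2}\theta,\sin^{2}\theta)$---whereas the paper leaves these points implicit in its contradiction argument, but the substance is the same.
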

\begin{proof}
   We prove this lemma via deducing contradictions. Suppose $P_{MS} > P_{E}(\Phi_{\theta})$. Then, we consider a scenario where the resource $|\Phi_{\theta}\rangle$ is shared by Alice and Bob. Further, let Bob locally prepare an ancilla state $|\Phi_{\eta}\rangle$. According to the definition, the probability of creating maximally entangled state between Alice and Bob by applying LOCC is $P_{MS}$, which is greater than $P_{E}(\Phi_{\theta})$. This result contradicts with Lemma~\ref{lm:ETtransformProb}. Thus, the assumption is not true. It should have $P_{MS} \le P_{E}(\Phi_{\theta})$. Similarly, we can get $P_{MS} \le P_{E}(\Phi_{\eta})$. Therefore, we have $P_{MS} \le \min\{ P_{E}(\Phi_{\theta}), P_{E}(\Phi_{\eta}) \}$.
\end{proof}

In our protocol, we consider a projective measurement on Clare's joint system. A successful projection of Clare is the one when maximally entangled state could be directly created between Alice and Bob without any further local operations. Our strategy is to construct a projective measurement such that the sum of the probabilities of the successful projections is as high as possible. For the projections that result Alice and Bob in less entangled states, we can apply the probabilistic entanglement concentration by performing a local measurement operation on Bob's system. In general, we can concentrate entanglement from the scenario with high probability and less LOCC complexity.

In the following lemma, we work out the lower and upper bounds on the probability of a successful projection.
\begin{lemma}
   Assume that the two-qubit pure states $|\Phi_{\theta}\rangle$ and $|\Phi_{\eta}\rangle$ are initially distributed in the scenario which is showed in Fig.~\ref{fig:Scenario2QBPure}. Without loss of generality, suppose $\theta,\eta\in(0,\frac{\pi}{4}]$. A maximally entangled state could be created between Alice and Bob by projecting Clare's joint system onto the state $|\varphi\rangle$ without any further local operations. The probability $p(\varphi)$ of the projection is bounded by
   \begin{align}\label{projectionBounds}
      \frac{\sin^{2}{2\theta}\sin^{2}{2\eta}} {4(1 + \cos{2\theta} \cos{2\eta})} \le p(\varphi) \le \frac{\sin^{2}{2\theta}\sin^{2}{2\eta}}{4(1 - \cos{2\theta} \cos{2\eta})}.
   \end{align}
\end{lemma}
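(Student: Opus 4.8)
The plan is to turn the statement into a clean linear-algebra optimization over $2\times2$ matrices. Label Clare's two qubits $C_1$ (shared with Alice) and $C_2$ (shared with Bob), so the global state is $|\Psi\rangle = |\Phi_\theta\rangle_{AC_1}\otimes|\Phi_\eta\rangle_{C_2B}$, and write a generic normalized projection vector of Clare's joint system in the $C_1C_2$ basis as $|\varphi\rangle = a|00\rangle+b|01\rangle+c|10\rangle+d|11\rangle$. Computing $\langle\varphi|\Psi\rangle$, the unnormalized post-projection state of $AB$ is $\sum_{ij}M_{ij}|ij\rangle_{AB}$ with
\begin{equation*}
M=\begin{pmatrix}\bar a\cos\theta\cos\eta & \bar b\cos\theta\sin\eta\\[2pt] \bar c\sin\theta\cos\eta & \bar d\sin\theta\sin\eta\end{pmatrix},
\end{equation*}
and the probability of this outcome is $p(\varphi)=\operatorname{tr}(MM^\dagger)=\sum_{ij}|M_{ij}|^2$.

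Next I would record the elementary characterization: a two-qubit pure state with coefficient matrix $M$ is maximally entangled precisely when its two Schmidt coefficients (the singular values of $M$) coincide, i.e.\ when $MM^\dagger$ is a multiple of the identity; equivalently $M=\sqrt{p(\varphi)/2}\,U$ for some $2\times2$ unitary $U$. Thus ``a maximally entangled state is created with no further local operations'' is exactly the condition $M=\sqrt{p(\varphi)/2}\,U$. Inverting this relation gives $\bar a=\sqrt{p(\varphi)/2}\,u_{11}/(\cos\theta\cos\eta)$ and the analogous expressions for $b,c,d$; imposing $|a|^2+|b|^2+|c|^2+|d|^2=1$ yields
\begin{equation*}
p(\varphi)=\frac{2}{S},\qquad S=\frac{|u_{11}|^2}{\cos^2\theta\cos^2\eta}+\frac{|u_{12}|^2}{\cos^2\theta\sin^2\eta}+\frac{|u_{21}|^2}{\sin^2\theta\cos^2\eta}+\frac{|u_{22}|^2}{\sin^2\theta\sin^2\eta}.
\end{equation*}

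Then I would use the fact that for a $2\times2$ unitary the matrix of squared moduli is doubly stochastic, so $|u_{11}|^2=|u_{22}|^2=\cos^2\phi$ and $|u_{12}|^2=|u_{21}|^2=\sin^2\phi$ for some $\phi$. Hence $S=A\cos^2\phi+B\sin^2\phi$ with $A=\frac{1}{\cos^2\theta\cos^2\eta}+\frac{1}{\sin^2\theta\sin^2\eta}$ and $B=\frac{1}{\cos^2\theta\sin^2\eta}+\frac{1}{\sin^2\theta\cos^2\eta}$. A short half-angle computation gives $A=8(1+\cos2\theta\cos2\eta)/(\sin^22\theta\sin^22\eta)$ and $B=8(1-\cos2\theta\cos2\eta)/(\sin^22\theta\sin^22\eta)$; since $\theta,\eta\in(0,\tfrac\pi4]$ we have $\cos2\theta\cos2\eta\ge0$, so $B\le S\le A$, and therefore $p(\varphi)=2/S$ lies between $2/A$ and $2/B$, which is precisely the interval in \eqref{projectionBounds}. (For completeness one can also note that every $\phi$ is realized by some $U$, and every such $U$ corresponds to a genuine $|\varphi\rangle$, so both endpoints are actually attained, although the lemma only asserts the two-sided bound.)

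The only non-routine step is the reduction in the second paragraph: recognizing that ``maximally entangled without further local operations'' is equivalent to $M$ being proportional to a unitary, and that after this substitution the problem collapses to an optimization whose answer is dictated entirely by the doubly-stochastic structure of $(|u_{ij}|^2)$. Everything after that—the inversion, the normalization constraint, and the trigonometric simplification of $A$ and $B$—is mechanical.
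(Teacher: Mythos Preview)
Your argument is correct and follows essentially the same route as the paper: both compute the post-projection $AB$ state, impose that it equal $(U\otimes I)|\Phi\rangle$ for some unitary $U$, use the normalization of $|\varphi\rangle$ to express $p(\varphi)^{-1}$ as a weighted sum of the $1/f_k^2$, and then exploit the fact that for a $2\times2$ unitary $|u_{11}|^2=|u_{22}|^2$ and $|u_{12}|^2=|u_{21}|^2$ (the paper obtains this via an explicit parametrization of $U$ with an angle $\alpha$, which plays exactly the role of your $\phi$). The only differences are cosmetic---your matrix formulation $M=\sqrt{p(\varphi)/2}\,U$ versus the paper's componentwise $f_k\mu_k^*=\sqrt{p(\varphi)}\,\Delta_k$---and the final trigonometric simplification is identical.
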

\begin{proof}
The initial state of the three participants' joint quantum system can be written as
\begin{align}
   |\phi_{0}\rangle_{ABC} = \sum_{k=0}^{3} f_{k}|k\rangle_{AB}|k\rangle_{C}
\end{align}
where $f_{0} = \cos{\theta}\cos{\eta}, f_{1} = \cos{\theta}\sin{\eta}, f_{2} = \sin{\theta}\cos{\eta}$ and $f_{3} = \sin{\theta}\sin{\eta}$. Suppose $|\varphi\rangle = \sum_{k=0}^{3}\mu_{k}|k\rangle \in \mathcal{H}_{2}^{\otimes 2}$ with the constraint $\sum_{k=0}^{3}|\mu_{k}|^{2} = 1$. Once the projection happened, the state of Alice and Bob's joint system would be
\begin{align}
   |\phi\rangle_{AB} = \frac{1}{\sqrt{p(\varphi)}}\langle \phi_{C} |\phi_{0}\rangle_{ABC} = \frac{1}{\sqrt{p(\varphi)}} \sum_{k=0}^{3}f_{k}\mu_{k}^{*}|k\rangle_{AB}. \nonumber
\end{align}
Note that any maximally entangled state of a two-qubit system can be written as $(U \otimes I)|\Phi\rangle$ where $U$ is a unitary on $\mathcal{H}_{2}$ and $|\Phi\rangle \equiv (|00\rangle + |11\rangle)/\sqrt{2}$  is the standard maximally entangled state. As we expect that a maximally entangled state would be created between Alice and Bob without any further local operations, it should also have
\begin{align}
   |\phi\rangle_{AB} = (U \otimes I)|\Phi\rangle
\end{align}
for a unitary $U$.

Let $\Delta_{k} \equiv \langle k|(U \otimes I)|\Phi\rangle$. Then, the parameters of the projection state and the corresponding probability are related by the formula as follows
\begin{align}
   f_{k}\mu_{k}^{*} = \sqrt{p(\varphi)}\Delta_{k}.
\end{align}
By applying the unit constraint of the projection state $|\varphi\rangle$, we get the probability of the successful projection as follows
\begin{align}
   p(\varphi) = (\sum_{k=0}^{3} \frac{|\Delta_{k}|^{2}}{f_{k}^{2}})^{-1}.
\end{align}

Without loss of generality, let $U=|0\rangle\langle\alpha_{0}| + |1\rangle\langle\alpha_{1}|$ where $|\alpha_{0}\rangle = e^{i\tau_{0}}(\cos{\alpha}|0\rangle + e^{i\gamma}\sin{\alpha}|1\rangle)$, $|\alpha_{1}\rangle = e^{i\tau_{1}}(\sin{\alpha}|0\rangle + e^{i\gamma}\cos{\alpha}|1\rangle)$ and $\alpha,\tau_{0},\tau_{1},\gamma\in[0,2\pi)$. It is trivial to find out that $|\Delta_{0}|^{2} =   |\Delta_{3}|^{2} = \frac{1}{2}\cos^{2}{\alpha}$ and $|\Delta_{1}|^{2} =  |\Delta_{2}|^{2} = \frac{1}{2}\sin^{2}{\alpha}$.
Hence, any successful projection $|\varphi\rangle$ should be equivalently written as
\begin{align}\label{eq:generalSuccessProjection}
   |\varphi\rangle = & \sqrt{\frac{p(\varphi)}{2}}(\cos{\alpha}(\frac{1}{f_{0}} |00\rangle + e^{i\beta}\frac{1}{f_{3}} |11\rangle)  \nonumber \\
                     & + e^{i\beta^{'}} \sin{\alpha}(\frac{1}{f_{1}} |01\rangle + e^{i\beta^{''}}\frac{1}{f_{2}} |10\rangle))
\end{align}
where $\alpha,\beta,\beta^{'},\beta^{''}\in[0,2\pi)$. The inverse of the probability can be rewritten as
\begin{align}
   p(\varphi)^{-1} = \frac{1}{2}(\frac{1}{f_{1}^{2}} + \frac{1}{f_{2}^{2}}) + \frac{1}{2}\cos^{2}{\alpha} (\frac{1}{f_{0}^{2}} + \frac{1}{f_{3}^{2}} - \frac{1}{f_{1}^{2}} - \frac{1}{f_{2}^{2}}). \nonumber
\end{align}
As we have assumed that $\theta,\eta\in(0,\frac{\pi}{4}]$, it follows that
\begin{align}
   \frac{1}{f_{0}^{2}} + \frac{1}{f_{3}^{2}} - \frac{1}{f_{1}^{2}} - \frac{1}{f_{2}^{2}} = \frac{16\cos{2\theta}\cos{2\eta}}{\sin^{2}{2\theta}\sin^{2}{2\eta}} \ge &0. \nonumber
\end{align}
Thus, we get
\begin{align}
   p(\varphi)^{-1} & \le \frac{1}{2}(\frac{1}{f_{0}^{2}} + \frac{1}{f_{3}^{2}}) = \frac{4(1 + \cos{2\theta} \cos{2\eta})}{\sin^{2}{2\theta}\sin^{2}{2\eta}}, \\
   p(\varphi)^{-1} & \ge \frac{1}{2}(\frac{1}{f_{1}^{2}} + \frac{1}{f_{2}^{2}}) = \frac{4(1 - \cos{2\theta} \cos{2\eta})}{\sin^{2}{2\theta}\sin^{2}{2\eta}}.
\end{align}
Then, the Eq.~(\ref{projectionBounds}) follows immediately.  Therefore, we have proved the lemma.
\end{proof}

Suppose Clare performs a projective measurement in the orthonormal basis $\{|\varphi_{k}\rangle\}_{k=1}^{4}$ with $p(\varphi_{k})$ being the probability of projecting the system into the state $|\varphi_{k}\rangle$ and $|\phi_{k}\rangle_{AB}$ being the corresponding post-measurement state of Alice and Bob's joint system.

Firstly, let $|\varphi_{1}\rangle$ be the successful projection such that $p(\varphi_{1})$ reaches the upper bound of the projection probability in Eq.~(\ref{projectionBounds}). The condition of achieving the upper bound probability is $\cos\alpha = 0$ in Eq.~(\ref{eq:generalSuccessProjection}). Thus, the projection state can be written as
\begin{equation}\label{maxprobprojection1}
   |\varphi_{1}\rangle = (f_{2}|01\rangle + e^{i\beta_{1}}f_{1}|10\rangle)/\sqrt{f_{1}^{2}+f_{2}^{2}}
\end{equation}
for an arbitrary phase $\beta_{1}\in[0,2\pi)$. The corresponding projection probability is $p(\varphi_{1}) = \frac{\sin^{2}{2\theta}\sin^{2}{2\eta}}{4(1 - \cos{2\theta} \cos{2\eta})}$.

In the next step, we try to construct another successful projection state $|\varphi_{2}\rangle$. The state $|\varphi_{2}\rangle$ should be of the form in Eq.~(\ref{eq:generalSuccessProjection}). The orthogonality of the projection states $|\varphi_{2}\rangle$ and $|\varphi_{1}\rangle$ requires $ \langle \varphi_{1}|\varphi_{2}\rangle = 0$ which is equivalent to the following constraint
\begin{align}
   \sqrt{\frac{p(\varphi)}{2(f_{1}^{2} + f_{2}^{2})}} e^{i\beta^{'}} \sin{\alpha}(\frac{f_{2}}{f_{1}} + e^{i(\beta^{''} - \beta_{1})} \frac{f_{1}}{f_{2}}) = 0. \nonumber
\end{align}
To let the above constraint hold for general resources $|\Phi_{\theta}\rangle$ and $|\Phi_{\eta}\rangle$, it should have $\sin{\alpha} = 0$. Thus, we get the second projection state as follows
\begin{align}\label{maxprobprojection2}
   |\varphi_{2} \rangle =  (f_{3} |00\rangle + e^{i\beta_{2}}f_{0}|11\rangle )/\sqrt{f_{0}^{2} + f_{3}^{2}}
\end{align}
where $\beta_{2}\in[0,2\pi)$ is an arbitrary phase and $p(\varphi_{2}) = \frac{\sin^{2}{2\theta}\sin^{2}{2\eta}}{4(1 + \cos{2\theta} \cos{2\eta})}$.

A simple calculation shows that there is not a third projection state which is of the form in Eq.~(\ref{eq:generalSuccessProjection}) and orthogonal to the projections states $|\varphi_{1} \rangle$ and $|\varphi_{2} \rangle$. The other projection states can be chosen as
\begin{align}
   |\varphi_{3}\rangle = & (f_{1}|01\rangle - e^{i\beta_{1}}f_{2} |10\rangle)/\sqrt{f_{1}^{2} + f_{2}^{2}}, \label{maxprobprojection3} \\
   |\varphi_{4}\rangle = & (f_{0}|00\rangle - e^{i\beta_{2}}f_{3} |11\rangle)/\sqrt{f_{0}^{2} + f_{3}^{2}} \label{maxprobprojection4}
\end{align}
with the projection probabilities being $p(\varphi_{3}) = \frac{f_{1}^{4} + f_{2}^{4}}{f_{1}^{2} + f_{2}^{2}}$ and $p(\varphi_{4}) = \frac{f_{0}^{4} + f_{3}^{4}}{f_{0}^{2} + f_{3}^{2}}$, respectively.
The corresponding post-measurement states of Alice and Bob's joint system are as follows
\begin{align}
   |\phi_{3}\rangle_{AB} = & (f_{1}^{2} |01\rangle - e^{-i\beta_{1}}f_{2}^{2} |10\rangle)/\sqrt{f_{1}^{4} + f_{2}^{4}}, \\
   |\phi_{4}\rangle_{AB} = & (f_{0}^{2} |00\rangle - e^{-i\beta_{2}}f_{3}^{2} |11\rangle)/\sqrt{f_{0}^{4} + f_{3}^{4}}.
\end{align}

We have figure out a projective measurement for Clare. Two of the projections would directly leave Alice and Bob's joint system in maximally entangled states while another two leave them in less entangled states. When the latter measurement outcomes are observed, we can concentrate entanglement from the less entangled states by performing local operations on either Alice or Bob's system.

Let $q$ and $q^{'}$ be the probabilities that we can concentrate maximally entangled state from $|\phi_{3}\rangle$ and $|\phi_{4}\rangle$ by performing the corresponding general measurement on Bob's qubit, respectively. As we have assumed $\theta,\eta\in(0,\frac{\pi}{4}]$, it follows that $f_{0} \ge f_{3}$. With out loss of generality, we suppose $\eta \ge \theta$ which implies $f_{1} \ge f_{2}$. By applying Lemma~\ref{lm:ETtransformProb}, we get
\begin{align}
   q = \frac{2f_{2}^{4}}{f_{1}^{4} + f_{2}^{4}}, q^{'} = \frac{2f_{3}^{4}}{f_{0}^{4} + f_{3}^{4}}. \nonumber
\end{align}
In general, the maximally entangled resource can be created between Alice and Bob with probability
\begin{align}
   P_{MS} = & p(\varphi_{1}) + p(\varphi_{2}) + p(\varphi_{3}) * q + p(\varphi_{4}) *q^{'} = 2\sin^{2}{\theta}.  \nonumber
\end{align}
Note that $2\sin^{2}{\theta}$ also equals to the optimal probability $P_{E}(\Phi_{\theta})$ that the maximally entangled state can be concentrated in the scenario where the resource $|\Phi_{\theta}\rangle$ is directly prepared between Alice and Bob. Thus, we have $P_{MS} = P_{E}(\Phi_{\theta})$.

Similarly, we get $P_{MS} = P_{E}(\Phi_{\eta})$ for the case $\theta \ge \eta$. Thus, the maximally entangled resource can be concentrated from the scenario in Fig.~\ref{fig:Scenario2QBPure} with probability $P_{MS} = \min\{ P_{E}(\Phi_{\theta}), P_{E}(\Phi_{\eta}) \}$. In fact, this is the optimal entanglement concentration rate for the scenario. It is obvious that $P_{MS}$ should be not greater than both $P_{E}(\Phi_{\theta})$ and $ P_{E}(\Phi_{\eta})$. Otherwise, say $P_{MS} > P_{E}(\Phi_{\theta})$, maximally entangled resource would be concentrated from the bipartitely distributed resource $|\Phi_{\theta}\rangle$ with probability $P_{MS}$ by locally preparing an ancilla state $|\Phi_{\eta}\rangle$ and applying above strategy, which contradicts with Lemma~\ref{lm:ETtransformProb}.

By performing the projective measurement on Clare's joint system, no further local operation is needed when any of the two successful projections is observed. Maximally entangled resources will be directly created between Alice and Bob with probability $\frac{\sin^{2}{2\theta}\sin^{2}{2\eta}}{2(1 - \cos^{2}{2\theta} \cos^{2}{2\eta})}$. Assume that the classical communication channel is liable. It is possible to establish an efficient classical communication agreement for telling Bob to take the corresponding action.

Therefore, we have constructed a protocol for extending the distance of entanglement distribution, which is optimal in perspective of entanglement concentration rate and efficient in perspective LOCC complexity. With this, we have proved the following theorem which is the main result of this section.
\begin{theorem}\label{thm:optimalCRofQuantumRepeater}
   Suppose two two-qubit pure states $|\Phi_{\theta}\rangle$ and $|\Phi_{\eta}\rangle$ are initially distributed in the scenario showed in Fig.~\ref{fig:Scenario2QBPure}. Firstly, apply a projective measurement in the orthonormal baisi $\{|\varphi_{k}\}_{k=1}^{4}$, which is defined in Eqs.~(\ref{maxprobprojection1}-\ref{maxprobprojection4}), on Clare's joint system. Secondly, selectively perform local operation on Bob's system according to Clare's measurement outcome. Then, maximally entangled resource could be created between Alice and Bob with probability $P_{MS} = \min\{ P_{E}(\Phi_{\theta}), P_{E}(\Phi_{\eta}) \}$. This protocol is efficient in perspectives of both entanglement concentration rate and LOCC complexity.
\end{theorem}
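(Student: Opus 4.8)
The plan is to assemble the ingredients developed above into one verification and then append the optimality argument. First I would confirm that the four vectors $|\varphi_1\rangle,\dots,|\varphi_4\rangle$ of Eqs.~(\ref{maxprobprojection1})--(\ref{maxprobprojection4}) genuinely form an orthonormal basis of $\mathcal{H}_2^{\otimes 2}$: the pair $\{|\varphi_1\rangle,|\varphi_3\rangle\}$ is supported on $\mathrm{span}\{|01\rangle,|10\rangle\}$ with mutually orthogonal coefficient vectors proportional to $(f_2,e^{i\beta_1}f_1)$ and $(f_1,-e^{i\beta_1}f_2)$ (both duly normalized), and $\{|\varphi_2\rangle,|\varphi_4\rangle\}$ is an orthonormal pair on the complementary subspace $\mathrm{span}\{|00\rangle,|11\rangle\}$; hence the union is orthonormal, and since $f_0^2+f_1^2+f_2^2+f_3^2=1$ the probabilities $p(\varphi_k)$ sum to $1$.

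Next, for each outcome I would read off the residual Alice--Bob state $|\phi_k\rangle_{AB}\propto\langle\varphi_k|\phi_0\rangle_{ABC}$ from the expansion $|\phi_0\rangle_{ABC}=\sum_{k=0}^{3}f_k|k\rangle_{AB}|k\rangle_C$. For $k=1,2$ this gives a state with two equal Schmidt coefficients, i.e.\ a maximally entangled state, so Bob does nothing, and the norms return $p(\varphi_1)=\sin^2 2\theta\sin^2 2\eta/(4(1-\cos 2\theta\cos 2\eta))$ and $p(\varphi_2)=\sin^2 2\theta\sin^2 2\eta/(4(1+\cos 2\theta\cos 2\eta))$. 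For $k=3,4$ the residual states $|\phi_3\rangle_{AB}$ and $|\phi_4\rangle_{AB}$ displayed above have unequal Schmidt coefficients; here I invoke Lemma~\ref{lm:ETtransformProb}, using $\theta,\eta\in(0,\tfrac{\pi}{4}]$ and the WLOG choice $\eta\ge\theta$ (so $f_0\ge f_3$ and $f_1\ge f_2$ pin down the smaller coefficient) to get the optimal single-measurement concentration probabilities $q=2f_2^4/(f_1^4+f_2^4)$ for $|\phi_3\rangle_{AB}$ and $q'=2f_3^4/(f_0^4+f_3^4)$ for $|\phi_4\rangle_{AB}$, realized by a local measurement on Bob's qubit.

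The crux is then the sum $P_{MS}=p(\varphi_1)+p(\varphi_2)+q\,p(\varphi_3)+q'\,p(\varphi_4)$. I expect the denominators $f_1^2+f_2^2$ and $f_0^2+f_3^2$ carried by $p(\varphi_1),p(\varphi_2)$ to cancel against those hidden in $q\,p(\varphi_3)$ and $q'\,p(\varphi_4)$, producing the telescoping identities $p(\varphi_1)+q\,p(\varphi_3)=2f_2^2$ and $p(\varphi_2)+q'\,p(\varphi_4)=2f_3^2$, hence $P_{MS}=2f_2^2+2f_3^2=2\sin^2\theta(\cos^2\eta+\sin^2\eta)=2\sin^2\theta=P_E(\Phi_\theta)$. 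Since both angles lie in $(0,\tfrac{\pi}{4}]$ and $\eta\ge\theta$, we have $P_E(\Phi_\theta)=\min\{P_E(\Phi_\theta),P_E(\Phi_\eta)\}$, and relabelling handles $\theta\ge\eta$. This telescoping is the one place I would expect to need genuine care; the rest is bookkeeping built on the lemmas already proved.

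For optimality I would appeal to Lemma~\ref{lm:optConcentrationRate}: no LOCC protocol can create a maximally entangled Alice--Bob state with probability exceeding $\min\{P_E(\Phi_\theta),P_E(\Phi_\eta)\}$, so the achieved rate is optimal. For the LOCC-complexity claim I would note that the protocol uses one projective measurement by Clare on her own two qubits, a single broadcast of her two-bit outcome, and at most one further local measurement by Bob --- a constant number of rounds --- with the two ``successful'' outcomes (occurring with total probability $\sin^2 2\theta\sin^2 2\eta/(2(1-\cos^2 2\theta\cos^2 2\eta))$) needing no post-processing at all; this is precisely the efficiency asserted.
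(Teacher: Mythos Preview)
Your proposal is correct and follows essentially the same route as the paper: verify the basis, compute the four outcome probabilities and residual states, concentrate the non-maximal outcomes via Lemma~\ref{lm:ETtransformProb}, sum to $2\sin^2\theta$, and invoke Lemma~\ref{lm:optConcentrationRate} for optimality. Your explicit grouping $p(\varphi_1)+q\,p(\varphi_3)=2f_2^2$ and $p(\varphi_2)+q'\,p(\varphi_4)=2f_3^2$ is a slightly cleaner way to organize the same arithmetic the paper performs in one line.
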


\subsection{Criteria for optimal projective measurements in perspective of entanglement concentration rate}
For the scenario depicted in Fig.~\ref{fig:Scenario2QBPure}, the projective measurement in the standard Bell basis is also able to achieve the optimal entanglement concentration rate~\cite{BY00}. Thus, the optimal projective measurement for Clare's joint system is not unique in perspective of entanglement concentration rate. However, not all of projective measurements could achieve that goal. For example, it is impossible to obtain entanglement between Alice and Bob by projecting Clare's system into separable states.

Hence, it is a practical problem to verify whether a projective measurements is able to concentrate entanglement from the scenario with the optimal rate. The following theorem gives a criteria for such measurements.
\begin{theorem}
   We consider a scenario with the states $|\Phi_{\theta}\rangle$ and $|\Phi_{\eta}\rangle$, which is showed in Fig.~\ref{fig:Scenario2QBPure}. With out loss of generality, suppose $\theta,\eta\in(0,\frac{\pi}{4}]$ and $\theta\le \eta$. Suppose $P \equiv \{P_{k} \}_{k=1}^{4}$ is a projective measurement on $\mathcal{H}_{2}^{\otimes 2}$ where $P_{k}$ are projectors.  With assistance of LOCC, the projective measurement $P$ on Clare's joint system is able to achieve the optimal entanglement concentration rate if and only if
   \begin{equation}
      \sum_{k=1}^{4} \sqrt{(tr(  T_{1} \otimes T_{2}P_{k}))^{2} + \sin^{2}{2\theta}|tr(|0\rangle \langle 1| \otimes T_{2}P_{k})|^{2}} = \cos{2\theta}
   \end{equation}
   where the operators $T_{1} = \cos^{2}{\theta} |0\rangle \langle 0| - \sin^{2}{\theta} |1\rangle \langle 1|$ and $T_{2} = \cos^{2}{\eta} |0\rangle \langle 0| + \sin^{2}{\eta} |1\rangle \langle 1|$.
\end{theorem}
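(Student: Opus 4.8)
The plan is to compute, for a given projective measurement $P=\{P_k\}_{k=1}^{4}$ on Clare's two qubits, the largest probability with which Alice and Bob can end up with a maximally entangled state once the measurement outcome is known, and then to recast that probability in the operator--trace form of the statement. I would treat the generic case of a von Neumann measurement, $P_k=|\varphi_k\rangle\langle\varphi_k|$ with $\{|\varphi_k\rangle\}_{k=1}^{4}$ an orthonormal basis of $\mathcal{H}_{2}^{\otimes 2}$; write $|\varphi_k\rangle=\sum_{j=0}^{3}v_{k,j}|j\rangle$ and recall $|\phi_0\rangle_{ABC}=\sum_{j=0}^{3}f_j|j\rangle_{AB}|j\rangle_C$ with $f_0=\cos\theta\cos\eta$, $f_1=\cos\theta\sin\eta$, $f_2=\sin\theta\cos\eta$, $f_3=\sin\theta\sin\eta$. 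Outcome $k$ leaves Alice and Bob in the sub--normalised pure state $|\psi_k\rangle_{AB}=\sum_{j=0}^{3}f_j\langle\varphi_k|j\rangle\,|j\rangle_{AB}$ with probability $p_k=\langle\psi_k|\psi_k\rangle=\sum_{j=0}^{3}f_j^{2}|v_{k,j}|^{2}$, and $\sum_{k}p_k=\sum_j f_j^{2}=1$ because $\{|\varphi_k\rangle\}$ is a basis. Arranging the amplitudes of $|\psi_k\rangle_{AB}$ into the $2\times2$ matrix $C_k$ (rows indexed by Alice's bit, columns by Bob's), the reduced state of Alice's qubit has eigenvalues $\mu_{k,\pm}$ with $\mu_{k,+}+\mu_{k,-}=1$ and $\mu_{k,+}\mu_{k,-}=|\det C_k|^{2}/p_k^{2}$.

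Next I would invoke Lemma~\ref{lm:ETtransformProb}: conditioned on outcome $k$ --- after which Clare holds a pure product state and can no longer assist --- Alice and Bob can convert $|\psi_k\rangle_{AB}$ into a maximally entangled state by LOCC with optimal probability $2\mu_{k,\min}=1-\sqrt{1-4|\det C_k|^{2}/p_k^{2}}$, the radicand being non-negative since $4\mu_{k,+}\mu_{k,-}\le(\mu_{k,+}+\mu_{k,-})^{2}=1$. Summing over the four outcomes, the best concentration probability the measurement $P$ enables is $P_{MS}(P)=\sum_{k}p_k\cdot 2\mu_{k,\min}=1-\sum_{k=1}^{4}\sqrt{p_k^{2}-4|\det C_k|^{2}}$. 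By Lemma~\ref{lm:optConcentrationRate} and Theorem~\ref{thm:optimalCRofQuantumRepeater} the optimal rate for this scenario equals $\min\{P_E(\Phi_\theta),P_E(\Phi_\eta)\}=2\sin^{2}\theta$ under the hypothesis $\theta\le\eta\le\frac{\pi}{4}$. Hence $P$ achieves the optimal rate if and only if $P_{MS}(P)=2\sin^{2}\theta$, that is, if and only if $\sum_{k=1}^{4}\sqrt{p_k^{2}-4|\det C_k|^{2}}=\cos2\theta$.

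The substantive step is to identify $p_k^{2}-4|\det C_k|^{2}$ with the quantity under the square root in the criterion. The key simplification is $f_0f_3=f_1f_2=\tfrac14\sin2\theta\sin2\eta$, which gives $|\det C_k|=\tfrac14\sin2\theta\sin2\eta\,|\det V_k|$, where $V_k$ is the $2\times2$ amplitude matrix of $|\varphi_k\rangle$. Since $T_1\otimes T_2=f_0^{2}|0\rangle\langle0|+f_1^{2}|1\rangle\langle1|-f_2^{2}|2\rangle\langle2|-f_3^{2}|3\rangle\langle3|$ --- the two minus signs forced by $T_1=\cos^{2}\theta|0\rangle\langle0|-\sin^{2}\theta|1\rangle\langle1|$ --- one gets $tr((T_1\otimes T_2)P_k)=f_0^{2}|v_{k,0}|^{2}+f_1^{2}|v_{k,1}|^{2}-f_2^{2}|v_{k,2}|^{2}-f_3^{2}|v_{k,3}|^{2}$, which differs from $p_k$ only in the signs of its last two terms; consequently $p_k^{2}-(tr((T_1\otimes T_2)P_k))^{2}=4\bigl(f_0^{2}|v_{k,0}|^{2}+f_1^{2}|v_{k,1}|^{2}\bigr)\bigl(f_2^{2}|v_{k,2}|^{2}+f_3^{2}|v_{k,3}|^{2}\bigr)$. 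Expanding this product and $4|\det C_k|^{2}$ and comparing with $\sin^{2}2\theta\,|tr((|0\rangle\langle1|\otimes T_2)P_k)|^{2}$ (note $|0\rangle\langle1|\otimes T_2=\cos^{2}\eta\,|0\rangle\langle2|+\sin^{2}\eta\,|1\rangle\langle3|$), the ``diagonal'' terms match via $f_0^{2}f_2^{2}=\tfrac14\sin^{2}2\theta\cos^{4}\eta$ and $f_1^{2}f_3^{2}=\tfrac14\sin^{2}2\theta\sin^{4}\eta$, while the interference term proportional to $\mathrm{Re}\bigl(v_{k,0}v_{k,3}\overline{v_{k,1}v_{k,2}}\bigr)$ matches via the coincidence $8\bigl(\tfrac14\sin2\theta\sin2\eta\bigr)^{2}=\tfrac12\sin^{2}2\theta\sin^{2}2\eta=\sin^{2}2\theta\cdot 2\cos^{2}\eta\sin^{2}\eta$. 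This establishes $p_k^{2}-4|\det C_k|^{2}=(tr((T_1\otimes T_2)P_k))^{2}+\sin^{2}2\theta\,|tr((|0\rangle\langle1|\otimes T_2)P_k)|^{2}$, and inserting it into the condition of the previous paragraph yields the stated criterion.

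The main obstacle is exactly this algebraic identity: one has to notice that it is the indefinite operator $T_1$ (with its minus sign), rather than the positive reduced--state operator $\cos^{2}\theta|0\rangle\langle0|+\sin^{2}\theta|1\rangle\langle1|$, whose insertion makes $p_k^{2}-(tr(\cdot))^{2}$ factor into a product that is itself a sum of squares plus a single interference term, and that this interference term is absorbed --- with precisely the right coefficient --- by the second trace; both facts come down to the elementary relations $f_0f_3=f_1f_2$ and $4\cos^{2}\eta\sin^{2}\eta=\sin^{2}2\eta$. A minor preliminary to settle is the restriction to rank--one projectors: only then is Clare left unentangled, so that Lemma~\ref{lm:ETtransformProb} applies outcome by outcome with equality, whereas a degenerate projective measurement (some $P_k$ of rank $\ge 2$) leaves Alice and Bob in a mixed state, is dominated by its refinement into a von Neumann measurement, and need not be considered separately.
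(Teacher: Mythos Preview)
Your proposal is correct and follows essentially the same strategy as the paper: write the post-measurement state of Alice and Bob, use Lemma~\ref{lm:ETtransformProb} outcome by outcome to get $P_{MS}(P)=1-\sum_k(\text{discriminant})$, and set this equal to the optimal value $2\sin^{2}\theta$ from Theorem~\ref{thm:optimalCRofQuantumRepeater}. The only real difference is bookkeeping in the eigenvalue step. The paper writes the entries of Alice's (unnormalised) reduced state directly as $a_{t_1t_2}^{k}=\theta_{t_1}\theta_{t_2}\,tr\bigl((|t_1\rangle\langle t_2|\otimes T_2)P_k\bigr)$ with $\theta_0=\cos\theta$, $\theta_1=\sin\theta$, so that $a_{00}^{k}-a_{11}^{k}=tr((T_1\otimes T_2)P_k)$ and $4a_{01}^{k}a_{10}^{k}=\sin^{2}2\theta\,|tr((|0\rangle\langle1|\otimes T_2)P_k)|^{2}$ fall out immediately from the $2\times2$ eigenvalue formula $(\text{trace}\pm\sqrt{(a_{00}-a_{11})^{2}+4|a_{01}|^{2}})/2$; you instead parametrise the same discriminant as $p_k^{2}-4|\det C_k|^{2}$ and then verify the matching algebraic identity by expanding both sides. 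Your route is a little longer but perfectly sound, and your closing remark on rank-one projectors is a welcome clarification that the paper leaves implicit.
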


\begin{proof}
   Suppose $P_{k} = |\varphi_{k}\rangle \langle \varphi_{k}|$ where $\{|\varphi_{k}\rangle\}_{k=1}^{4}$ is an orthonormal basis of the space $\mathcal{H}_{2}^{\otimes 2}$. Note that the initial state of the configuration is $|\Phi_{\theta},\Phi_{\eta}\rangle = \sum_{t=0}^{3}f_{t}|t\rangle_{C}|t\rangle_{AB}$. In the case that the measurement outcome $k$ is observed, the post-measurement state of Alice and Bob's joint system will be
   \begin{equation}
      |\phi_{k}\rangle_{AB} = \frac{1}{\sqrt{p_{k}}} \sum_{t = 0}^{3}f_{t} \langle \varphi_{k} |t\rangle |t\rangle_{AB}  \nonumber
   \end{equation}
   where $p_{k}$ is the probability of observing the measurement result ${k}$. From Lemma~\ref{lm:ETtransformProb}, the probability $P_{E}(\phi_{k})$ is twice of the square of the state's minimal Schmidt number. It is also equal to the twice of the minimal eigenvalue of the density operator of either Alice or Bob's system. On condition that Alice and Bob share the state $|\phi_{k}\rangle_{AB}$, the density operator of Alice's system is
   \begin{align}
      \rho_{A}^{k}  \equiv   tr_{B}(|\phi_{k}\rangle_{AB}\langle \phi_{k}|) = \frac{1}{p_{k}} \sum_{t_{1},t_{2} = 0}^{1}  a_{t_{1}t_{2}}^{k} |t_{1}\rangle \langle t_{2}|.  \nonumber
   \end{align}
   where $a_{t_{1}t_{2}}^{k} = \theta_{t_{1}}\theta_{t_{2}} tr((|t_{1}\rangle \langle t_{2}| \otimes T_{2}) P_{k})$. For the purpose of convenience, we take the notation $\theta_{0} \equiv \cos{\theta}$ and $\theta_{1} \equiv \sin{\theta}$. The eigenvalues of $\rho_{A}^{k}$ are
   \begin{align}
      \lambda_{\pm}^{k} =  \frac{1}{2p_{k}}((a_{00}^{k} + a_{11}^{k}) \pm \sqrt{(a_{00}^{k} - a_{11}^{k})^{2} + 4 a_{01}^{k} a_{10}^{k}}).  \nonumber
   \end{align}
   As $\lambda_{+}^{k} \ge \lambda_{-}^{k}$, we get $P_{E}(\phi_{k}) = 2\lambda_{-}^{k}$.

   When all the measurement outcomes are considered, maximally entangled state could be concentrated from the scenario with rate
   \begin{align}
      p_{s} \equiv & \sum_{k = 1}^{4}  p_{k}\cdot P_{E}(\phi_{k}) \nonumber \\
       = & \sum_{k = 1}^{4}((a_{00}^{k} + a_{11}^{k}) - \sqrt{(a_{00}^{k} - a_{11}^{k})^{2} + 4 a_{01}^{k} a_{10}^{k}}).  \nonumber
   \end{align}
   Note that $\sum_{k = 1}^{4}a_{00}^{k} = \sum_{k = 1}^{4}\cos^{2}\theta tr( ( |0\rangle \langle 0| \otimes T_{2})P_{k}) = \cos^{2}\theta$. Similarly, we get $\sum_{k = 1}^{4}a_{11}^{k} = \sin^{2}\theta$. It is trivial to see that $a_{00}^{k} - a_{11}^{k} = tr((T_{1} \otimes T_{2})P_{k})$ and $4a_{01}^{k}a_{10}^{k} = \sin^{2}{2\theta}|tr((|0\rangle\langle 1| \otimes T_{2})P_{k})|^{2}$. Thus, the rate can be equivalently written as
   \begin{align}
         p_{s} = 1 -  \sum_{k=1}^{4} &  \sqrt{(tr(  (T_{1} \otimes T_{2})P_{k}))^{2} + } \nonumber \\
         & \overline{ \sin^{2}{2\theta} |tr((|0\rangle\langle 1| \otimes T_{2}) P_{k}|^{2}}. \nonumber
   \end{align}

   According to Theorem~\ref{thm:optimalCRofQuantumRepeater}, the optimal entanglement concentration rate of the scenario showed in Fig.~\ref{fig:Scenario2QBPure} is $2\sin^2{\theta}$. Therefore, the projective measurement $P$ is able to achieve the optimal entanglement concentration rate, which means $p_{s} = 2\sin^2{\theta}$, if and only if
   \begin{align}
       \sum_{k=1}^{4} \sqrt{(tr(  T_{1} \otimes T_{2}P_{k}))^{2} + \sin^{2}{2\theta}|tr(|0\rangle \langle 1| \otimes T_{2}P_{k})|^{2}} = \cos{2\theta}.  \nonumber
   \end{align}
\end{proof}

With simple calculations, it shows that both the projective measurement in standard Bell basis and the one we proposed in this section fulfill the criteria.

\section{The upper bound for the scenario with different general pure states}
In this section, we exploit a more general scenario where different general pure states are prepared and general measurements are consider. Suppose a general bipartite pure state $|\psi_{AB}\rangle$ is shared by Alice and Clare while $|\psi_{CB}\rangle$ is shared by Clare and Bob. We analyze the measurement outcome which leaves Alice and Bob in a maximally entangled state without any further local operations. We refer such measurement outcome as a successful measurement outcome. The scenario is depicted in Fig.~\ref{fig:Senario_General_Pure}. Note that an arbitrary maximally entangled state can be written as $|\Omega_{U}\rangle \equiv (U\otimes I)|\Omega\rangle$, where $U$ is a local unitary on $\mathcal{H}_{d}$ and $|\Omega\rangle = \frac{1}{\sqrt{d}} \sum_{k=1}^{d} |k\rangle |k\rangle$ is the standard maximally entangled state.
\begin{figure}[h]
   \centering
   \includegraphics[width=0.5\textwidth]{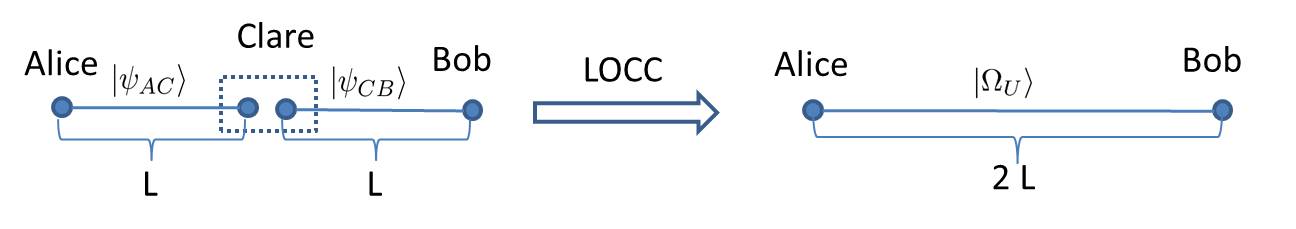}
   \caption{Quantum repeater scenario with two general pure states.}\label{fig:Senario_General_Pure}
\end{figure}

The probability of observing a successful measurement outcome is variant for different $U$. Theorem~\ref{thm:UB4generalpurestates} gives an upper bound on the probability, which is the main result of this section. To prove Theorem~\ref{thm:UB4generalpurestates}, we need two lemmas. Lemma~\ref{lm:QuantumSteering} is concluded from Wolf's lecture notes~\cite{MW12}. Lemma~\ref{lm:lowerboundtraceOperatorMulti} is a generalization of a mathematical theorem which is proved by Lewis~\cite{AS96}. We denote $\lambda_{\downarrow}(A)$ for the collum vector of operator $A$'s eigenvalues in the nonincreasing order and $\lambda_{\uparrow}(A)$ for that in the nondecreasing order.
\begin{lemma}\label{lm:QuantumSteering}
   Suppose $|\psi\rangle$ is a bipartite pure state of the joint quantum system $A\otimes B$. Let $\rho_{A}\equiv tr_{B}(|\psi\rangle \langle \psi|)$ be the density operator of the subsystem $A$. Suppose $\rho_{A}$ could be expressed as a convex combination $\rho_{A} = \sum_{i}\lambda_{i}\rho_{i}$ where $\lambda_{i}>0$, $\sum_{i} \lambda_{i} = 1$ and $\rho_{i}$ are density operators on $\mathcal{H}_{A}$. Then, there is a quantum measurement operation on system $B$, say $T=\{ T_{i}: \mathcal{B(H_{B})} \rightarrow \mathcal{B(H_{B})} \}$, such that
   \begin{align}
      \lambda_{i}\rho_{i} = tr_{B}[(I\otimes T_{i})(|\psi\rangle \langle \psi |)].
   \end{align}
   The parameter $\lambda_{i}$ can be interpreted as the probability of observing the measurement outcome $i$. The  upper bound of $\lambda_{i}$ is as follows
   \begin{align}
      \lambda_{i} \le \parallel \rho^{-\frac{1}{2}}\rho_{i} \rho^{-\frac{1}{2}} \parallel^{-1}_{\infty}.
   \end{align}
\end{lemma}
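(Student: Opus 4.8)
The plan is to prove the statement in two essentially independent steps: first construct a measurement on $B$ realizing the given decomposition of $\rho_A$ (the Hughston--Jozsa--Wootters ``steering'' construction), and then extract the probability bound from a one-line operator inequality. Everything takes place on the support of $\rho_A$; write $\Pi$ for the projector onto $\operatorname{supp}(\rho_A)$ and $\rho_A^{-1/2}$ for the Moore--Penrose inverse of $\rho_A^{1/2}$, so that $\rho_A^{1/2}\rho_A^{-1/2}=\rho_A^{-1/2}\rho_A^{1/2}=\Pi$. Since $\rho_A=\sum_i\lambda_i\rho_i$ with $\lambda_i>0$ and $\rho_i\ge 0$, we have $\lambda_i\rho_i\le\rho_A$ for each $i$, so $\operatorname{supp}(\rho_i)\subseteq\operatorname{supp}(\rho_A)$, and therefore $M_i\equiv\rho_A^{-1/2}(\lambda_i\rho_i)\rho_A^{-1/2}$ is a well-defined family of positive operators on $\operatorname{supp}(\rho_A)$ with $\sum_iM_i=\rho_A^{-1/2}\rho_A\rho_A^{-1/2}=\Pi$; that is, $\{M_i\}$ is a POVM on $\operatorname{supp}(\rho_A)$.

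Next I would carry this POVM over to $B$. Fix a Schmidt decomposition $|\psi\rangle=\sum_j\sqrt{p_j}\,|a_j\rangle_A|b_j\rangle_B$ with all $p_j>0$, so $\{|a_j\rangle\}$ spans $\operatorname{supp}(\rho_A)$ and $\{|b_j\rangle\}$ spans $\operatorname{supp}(\rho_B)$, and let $\Lambda$ be the linear map that sends an operator $N$ on $\operatorname{span}\{|b_j\rangle\}$ to $\Lambda(N)\equiv\sum_{j,k}\langle b_k|N|b_j\rangle\,|a_j\rangle\langle a_k|$. This $\Lambda$ is just ``transposition read off in the Schmidt bases,'' hence a linear bijection onto operators on $\operatorname{supp}(\rho_A)$ which preserves positivity and sends the identity to $\Pi$. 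A short direct calculation yields the key identity
\[
  tr_B\!\big[(I\otimes N)\,|\psi\rangle\langle\psi|\big]=\rho_A^{1/2}\,\Lambda(N)\,\rho_A^{1/2},
\]
which also holds with any completely positive $T_i(X)=\sum_\ell A_{i,\ell}XA_{i,\ell}^\dagger$ in place of $I\otimes N$ and $E_i\equiv\sum_\ell A_{i,\ell}^\dagger A_{i,\ell}$ in place of $N$, since after tracing out $B$ only $E_i$ (indeed only its compression to $\operatorname{supp}(\rho_B)$) survives. Now set $E_i\equiv\Lambda^{-1}(M_i)\ge 0$; then $\sum_iE_i=\Lambda^{-1}(\Pi)$ is the projector onto $\operatorname{supp}(\rho_B)$, so adding $I-\sum_iE_i$ to, say, $E_1$ turns $\{E_i\}$ into a genuine POVM on $\mathcal H_B$ without changing any reduced state, because $|\psi\rangle$ has no weight off $\operatorname{supp}(\rho_B)$. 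Taking $T_i(X)\equiv\sqrt{E_i}\,X\,\sqrt{E_i}$, the identity gives $tr_B[(I\otimes T_i)|\psi\rangle\langle\psi|]=\rho_A^{1/2}M_i\rho_A^{1/2}=\lambda_i\rho_i$ (using $\rho_A^{1/2}\rho_A^{-1/2}=\Pi$ and $\operatorname{supp}(\rho_i)\subseteq\operatorname{supp}(\rho_A)$), and taking the trace of both sides shows $\langle\psi|(I\otimes E_i)|\psi\rangle=\lambda_i\,tr(\rho_i)=\lambda_i$, which is precisely the probability of outcome $i$.

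For the bound I would just go back to $\lambda_i\rho_i\le\rho_A$ and conjugate by $\rho_A^{-1/2}$ (legitimate because $\operatorname{supp}(\rho_i)\subseteq\operatorname{supp}(\rho_A)$): this gives $\lambda_i\,\rho_A^{-1/2}\rho_i\rho_A^{-1/2}\le\Pi\le I$, hence $\lambda_i\parallel\rho_A^{-1/2}\rho_i\rho_A^{-1/2}\parallel_\infty\le 1$, and since $\rho_i\ne 0$ the operator norm is strictly positive, which yields $\lambda_i\le\parallel\rho_A^{-1/2}\rho_i\rho_A^{-1/2}\parallel^{-1}_\infty$.

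The substantive work here is entirely bookkeeping: handling a rank-deficient $\rho_A$ consistently via pseudoinverses and support projectors, and writing the Schmidt/transpose identity with the correct bases so that $\Lambda$ is manifestly positive and invertible. The one place to be careful is the end of the second step --- verifying that the $\{T_i\}$ constructed really form a legitimate quantum instrument on all of $\mathcal H_B$ rather than just on $\operatorname{supp}(\rho_B)$. Note that no optimization enters at any point: the probability bound in the third step follows from $\lambda_i\rho_i\le\rho_A$ alone and is independent of the particular measurement built in the second step.
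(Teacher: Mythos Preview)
Your argument is correct. The paper, however, does not actually give its own proof of this lemma: it merely states that the result ``is concluded from Wolf's lecture notes'' and cites that reference. So there is no paper proof to compare against in detail.

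That said, what you have written is precisely the standard argument one would find there. The first step is the Hughston--Jozsa--Wootters steering construction: pull the desired ensemble $\{\lambda_i,\rho_i\}$ back through $\rho_A^{-1/2}$ to get a POVM on $\operatorname{supp}(\rho_A)$, then use the Schmidt isometry (your transpose map $\Lambda$) to transport it to a POVM on $\operatorname{supp}(\rho_B)$, padding with the orthogonal complement to get a genuine instrument on all of $\mathcal H_B$. The second step, deriving the bound on $\lambda_i$ directly from the operator inequality $\lambda_i\rho_i\le\rho_A$ by conjugating with $\rho_A^{-1/2}$, is exactly the one-line argument Wolf gives. Your care with pseudoinverses and support projectors in the rank-deficient case is appropriate and the bookkeeping is clean; in particular, your verification that only the POVM element $E_i$ (restricted to $\operatorname{supp}(\rho_B)$) survives after the partial trace is the right way to see why the extension to all of $\mathcal H_B$ costs nothing.
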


\begin{lemma}\label{lm:lowerboundtraceOperatorMulti}
   For Hermitian operators $A$ and $B$,
   \begin{align}\label{ieq:LBtraceOperatorMulti}
      tr(AB) \ge \lambda_{\uparrow}(A)^{T}\lambda_{\downarrow}(B)
   \end{align}
   with equality if and only if there is a unitary operator $U$ such that $U^{\dagger}AU = diag(\lambda_{\uparrow}(A))$ and $U^{\dagger}BU = diag(\lambda_{\downarrow}(B))$.
\end{lemma}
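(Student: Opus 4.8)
The plan is to reduce the operator inequality to a scalar optimisation over doubly stochastic matrices and then invoke the rearrangement inequality. First I would fix spectral decompositions $A = \sum_{i=1}^{d} a_{i}\,|u_{i}\rangle\langle u_{i}|$ and $B = \sum_{j=1}^{d} b_{j}\,|v_{j}\rangle\langle v_{j}|$, where the eigenvalues are ordered so that $a_{1}\le\cdots\le a_{d}$ (the entries of $\lambda_{\uparrow}(A)$), $b_{1}\ge\cdots\ge b_{d}$ (the entries of $\lambda_{\downarrow}(B)$), and $\{|u_{i}\rangle\}$, $\{|v_{j}\rangle\}$ are orthonormal bases. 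Expanding the trace in the $|u_{i}\rangle$ basis gives $tr(AB) = \sum_{i,j} a_{i} b_{j}\, |\langle u_{i}|v_{j}\rangle|^{2} = \sum_{i,j} a_{i} b_{j} S_{ij}$ with $S_{ij}\equiv |\langle u_{i}|v_{j}\rangle|^{2}$. Because both families are orthonormal bases, the matrix $S=(S_{ij})$ is doubly stochastic: its entries are nonnegative and every row and every column sums to $1$.

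Next I would minimise the linear functional $S\mapsto \sum_{i,j} a_{i} b_{j} S_{ij}$ over the Birkhoff polytope of doubly stochastic matrices. Since this set is compact and convex and the functional is linear, its minimum is attained at an extreme point, which by Birkhoff's theorem is a permutation matrix $P_{\pi}$; hence $tr(AB)\ge \min_{\pi}\sum_{i} a_{i} b_{\pi(i)}$. Applying the rearrangement inequality to the oppositely ordered sequences $(a_{i})$ (nondecreasing) and $(b_{i})$ (nonincreasing), the sum $\sum_{i} a_{i} b_{\pi(i)}$ is minimised by the identity permutation, so $\min_{\pi}\sum_{i} a_{i} b_{\pi(i)} = \sum_{i} a_{i} b_{i} = \lambda_{\uparrow}(A)^{T}\lambda_{\downarrow}(B)$. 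This establishes the inequality~(\ref{ieq:LBtraceOperatorMulti}).

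For the equality case the ``if'' direction is immediate: if $U^{\dagger}AU=\mathrm{diag}(\lambda_{\uparrow}(A))$ and $U^{\dagger}BU=\mathrm{diag}(\lambda_{\downarrow}(B))$ then $tr(AB)=tr\!\big((U^{\dagger}AU)(U^{\dagger}BU)\big)=\sum_{i} a_{i} b_{i}$. For the ``only if'' direction I would trace back through the two inequalities used. Equality forces $S$ to minimise the linear functional, hence to lie in the face of the Birkhoff polytope spanned by the permutation matrices that achieve the rearrangement minimum; consequently $S_{ij}$ can be nonzero only for pairs $(i,j)$ matched by some minimising permutation, which in turn forces $S$ to be block diagonal with respect to the decomposition of $\mathcal{H}_{d}$ into the common level sets of the sequences $(a_{i})$ and $(b_{i})$. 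On each such block the spans of the corresponding $|u_{i}\rangle$ and $|v_{j}\rangle$ coincide, so a single orthonormal basis of that block diagonalises both $A$ and $B$ with the correct ordering; collecting these bases over all blocks yields the unitary $U$.

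The main obstacle is precisely this equality characterisation in the presence of degenerate eigenvalues: when $A$ or $B$ has repeated eigenvalues, many permutations tie for the rearrangement minimum, so one cannot simply read off a common eigenbasis, and some care is needed to show that the support structure of $S$ really does align the eigenspaces of $A$ and $B$ block by block. An alternative that sidesteps part of this bookkeeping is to invoke the cited theorem of Lewis~\cite{AS96} in its normal-decomposition form and specialise it to Hermitian matrices under the unitary group action, though one would still need to check that the equality condition there reduces to the simultaneous-diagonalisation statement asserted here.
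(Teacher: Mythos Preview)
The paper does not actually prove this lemma: it is stated as a specialisation of a result of Lewis~\cite{AS96} and then used without further argument. Your proposal therefore goes well beyond what the paper supplies, replacing a bare citation by a self-contained argument via the Birkhoff polytope and the rearrangement inequality.

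Your proof of the inequality itself is the standard one and is correct: writing $tr(AB)=\sum_{i,j}a_{i}b_{j}S_{ij}$ with $S$ doubly stochastic, minimising the linear functional over the Birkhoff polytope, and applying rearrangement gives exactly $\lambda_{\uparrow}(A)^{T}\lambda_{\downarrow}(B)$. The ``if'' direction of the equality clause is immediate, as you note. For the ``only if'' direction, your outline---that equality forces $S$ to lie in the face spanned by minimising permutations, hence to have support confined to index pairs compatible with the anti-monotone matching, which in turn aligns the eigenspaces block by block---is the right strategy and does go through; you are also right that the degenerate case is where the bookkeeping lives, since the minimising face can then be higher-dimensional and one must argue that the resulting orthogonality constraints on the $|u_{i}\rangle$ and $|v_{j}\rangle$ still force the relevant eigenspace inclusions. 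Compared with the paper's bare appeal to Lewis, your route is more elementary and more transparent about where the work is; the price is precisely the case analysis you flag, which Lewis's normal-decomposition framework packages once and for all.
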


\begin{theorem}\label{thm:UB4generalpurestates}
   Suppose two general pure states $|\psi_{AC}\rangle = \sum_{k=1}^{d_{A}} \sqrt{a_{k}}|k\rangle|k\rangle$ and $|\psi_{CB}\rangle = \sum_{k=1}^{d_{B}} \sqrt{b_{k}}|k\rangle|k\rangle$ are prepared in the scenario, which is showed in Fig.~\ref{fig:Senario_General_Pure}. With out loss of generality, suppose $a_{1}\ge a_{2} \ge \cdots a_{d_{A}} > 0$, $b_{1}\ge b_{2} \ge \cdots b_{d_{B}} > 0$ and $d \equiv d_{B} \ge d_{A}$.  Clare performs a general measurement $M\equiv \{M_{i} \}$ on his joint system with $p_{i}$ being the  probability of observing the measurement outcome $i$. Suppose Alice and Bob would share a maximally entangled state $|\Omega_{U}\rangle \in \mathcal{H}_{d}^{\otimes 2}$ when the measurement outcome $i$ is observed by Clare. Then, the corresponding probability $p_{i}$ has the following upper bound
   \begin{align}
      p_{i} \le \frac{d}{\sum_{k=1}^{d_{A}}\frac{1}{a_{k} b_{d_{A} + 1 - k}}}\equiv p_{max}.
   \end{align}
   The upper bound can be achieved by setting the measurement operator $M_{i} = \sqrt{p_{max}} |\Omega_{U}\rangle \langle \Omega_{U}| \rho_{AB}^{-\frac{1}{2}} $ where $U = \sum_{k= 1}^{d_{A}}|d_{A} + 1 - k \rangle \langle k| + \sum_{k=d_{A}+1}^{d} |k\rangle \langle k|$ and $\rho_{AB}$ is the initial state of the Alice and Bob's joint system.
\end{theorem}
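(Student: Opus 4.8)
The plan is to view Clare's measurement as a \emph{steering} operation across the bipartition $(AB)\,|\,C$, reduce the whole question to a scalar minimisation over the local unitary $U$, and close that minimisation with the trace inequality of Lemma~\ref{lm:lowerboundtraceOperatorMulti}. First I would note that $|\psi_{AC}\rangle\otimes|\psi_{CB}\rangle$ is a pure state on the cut $(AB)\,|\,C$, and that, because the two resources sit in a tensor product, the reduced state of Alice and Bob is the product $\rho_{AB}=\rho_A\otimes\rho_B$ with $\rho_A=\sum_k a_k|k\rangle\langle k|$ and $\rho_B=\sum_k b_k|k\rangle\langle k|$. Clare's POVM $\{M_j\}$ then induces an ensemble $\rho_{AB}=\sum_j p_j\sigma_j$, where $\sigma_j$ is the conditional state of $AB$; the term $p_i\sigma_i$ is dominated by $\rho_{AB}$, so Lemma~\ref{lm:QuantumSteering} applied with $\sigma_i=|\Omega_U\rangle\langle\Omega_U|$ gives
\[
  p_i\le\|\rho_{AB}^{-1/2}|\Omega_U\rangle\langle\Omega_U|\rho_{AB}^{-1/2}\|_\infty^{-1}=\langle\Omega_U|\rho_{AB}^{-1}|\Omega_U\rangle^{-1},
\]
the last equality holding because that operator has rank one. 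Everything now rests on \emph{lower}-bounding $\langle\Omega_U|\rho_{AB}^{-1}|\Omega_U\rangle$ over all unitaries $U$.

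For that I would use the ``transpose trick'': writing $|\Omega_U\rangle=(U\otimes I)|\Omega\rangle$ and using $\langle\Omega|(X\otimes Y)|\Omega\rangle=\tfrac1d\,\mathrm{tr}(XY^{T})$ together with $\rho_B^{-1}=(\rho_B^{-1})^{T}$, one obtains $\langle\Omega_U|\rho_{AB}^{-1}|\Omega_U\rangle=\tfrac1d\,\mathrm{tr}(U^\dagger\rho_A^{-1}U\,\rho_B^{-1})$. Here $U^\dagger\rho_A^{-1}U$ is Hermitian with the same spectrum as $\rho_A^{-1}$ — the $d_A$ values $1/a_k$, together with $d-d_A$ zeros when $d_A<d$ — while $\rho_B^{-1}$ has spectrum $\{1/b_k\}$. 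Applying Lemma~\ref{lm:lowerboundtraceOperatorMulti} to these two Hermitian operators yields $\mathrm{tr}(U^\dagger\rho_A^{-1}U\rho_B^{-1})\ge\lambda_{\uparrow}(U^\dagger\rho_A^{-1}U)^{T}\lambda_{\downarrow}(\rho_B^{-1})$; listing the eigenvalues $1/a_k$ in increasing order (the zeros first, then $1/a_1\le\cdots\le 1/a_{d_A}$) against the $1/b_k$ in decreasing order annihilates the zero slots and pairs $1/a_k$ with $1/b_{d_A+1-k}$, so $\mathrm{tr}(U^\dagger\rho_A^{-1}U\rho_B^{-1})\ge\sum_{k=1}^{d_A}\tfrac{1}{a_k b_{d_A+1-k}}$. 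Substituting into the previous display gives exactly $p_i\le p_{max}$.

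For achievability I would take $U=\sum_{k=1}^{d_A}|d_A+1-k\rangle\langle k|+\sum_{k=d_A+1}^{d}|k\rangle\langle k|$, the permutation reversing the first $d_A$ basis vectors: a direct evaluation of $\mathrm{tr}(U^\dagger\rho_A^{-1}U\rho_B^{-1})$ shows it equals $\sum_{k=1}^{d_A}\tfrac1{a_k b_{d_A+1-k}}$, i.e.\ it saturates the bound above. It then remains to check that $M_i=\sqrt{p_{max}}\,|\Omega_U\rangle\langle\Omega_U|\rho_{AB}^{-1/2}$ is a legitimate POVM element and realises the claimed transition: a one-line computation gives $M_i^\dagger M_i=p_{max}\,\rho_{AB}^{-1/2}|\Omega_U\rangle\langle\Omega_U|\rho_{AB}^{-1/2}$, whose operator norm is $p_{max}\langle\Omega_U|\rho_{AB}^{-1}|\Omega_U\rangle=1$, so $M_i^\dagger M_i$ is a rank-one projector (hence $\le I$ and completable to a POVM); and feeding this $M_i$ into the steering correspondence of Lemma~\ref{lm:QuantumSteering} shows that outcome $i$ leaves $AB$ in $|\Omega_U\rangle$ (restricted to $\mathrm{supp}\,\rho_{AB}$) with probability $p_{max}$.

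The step I expect to be the main obstacle is the reduction itself: making precise that an \emph{individual} outcome of Clare's measurement obeys the operator-norm bound of Lemma~\ref{lm:QuantumSteering} on the $(AB)\,|\,C$ cut, and handling the spectral bookkeeping when $d_A<d_B$ — the pseudo-inverses $\rho_A^{-1}$, the zero eigenvalues that must absorb the $d-d_A$ largest entries of $\lambda_{\downarrow}(\rho_B^{-1})$, and the resulting anti-sorted pairing $a_k\leftrightarrow b_{d_A+1-k}$. Once the problem is distilled to $\min_U\mathrm{tr}(U^\dagger\rho_A^{-1}U\rho_B^{-1})$, Lemma~\ref{lm:lowerboundtraceOperatorMulti} finishes it essentially mechanically.
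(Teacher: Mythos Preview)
Your proposal is correct and follows essentially the same route as the paper: both apply Lemma~\ref{lm:QuantumSteering} on the $(AB)\,|\,C$ cut to obtain $p_i\le\|\rho_{AB}^{-1/2}|\Omega_U\rangle\langle\Omega_U|\rho_{AB}^{-1/2}\|_\infty^{-1}$, reduce the rank-one norm to the scalar $\tfrac1d\,\mathrm{tr}(U\rho_B^{-1}U^\dagger\rho_A^{-1})$ (you via the transpose trick, the paper via a direct component expansion), and then invoke Lemma~\ref{lm:lowerboundtraceOperatorMulti} to minimise over $U$, with the same permutation $U$ realising equality. Your write-up is in fact slightly more careful than the paper's in verifying $M_i^\dagger M_i\le I$ and in flagging the pseudo-inverse/support issue when $d_A<d_B$.
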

\begin{proof}
   Initially, Alice and Bob's joint system is in the state $\rho_{AB} = \rho_{A} \otimes \rho_{B}$ where $\rho_{A} = \sum_{k=1}^{d} a_{k}|k\rangle \langle k|$ and $\rho_{B} = \sum_{k=1}^{d} b_{k}|k\rangle \langle k|$ . Note that we extend $\rho_{A}$ onto the space $\mathcal{H}_{d}$ by setting $a_{k}=0$ for $k>d_{A}$. In the following discussion, we denote $a_{k}^{-1} = 0$ for $k$ such that $a_{k} = 0$.

   According to Lemma ~\ref{lm:QuantumSteering}, the probability $p_{i}$ is upper bounded by
   \begin{align}\label{ineq:UBProbNorm}
      p_{i} \le \parallel \rho_{AB}^{-\frac{1}{2}} \rho_{U} \rho_{AB}^{-\frac{1}{2}} \parallel^{-1}_{\infty}.
   \end{align}
   where the inverse operator is defined on the corresponding support space. The equality in Eq.~(\ref{ineq:UBProbNorm}) holds when $M_{i}^{T} = \sqrt{p_{i}} \rho_{AB}^{-\frac{1}{2}} \rho_{U}$. A simple calculation shows that $\rho_{AB}^{-\frac{1}{2}}|\Omega_{U}\rangle = \frac{1}{\sqrt{d}} \sum_{k,t =1}^{d}a_{k}^{-\frac{1}{2}} b_{t}^{-\frac{1}{2}} \langle k|U|t\rangle |k\rangle |t\rangle$. Thus, we get
   \begin{align}
     \| \rho_{AB}^{-\frac{1}{2}} \rho_{U} \rho_{AB}^{-\frac{1}{2}}  \|_{\infty} = & \|\rho_{AB}^{-\frac{1}{2}}|\Omega_{U}\rangle \|^{2} \nonumber \\
     = & \frac{1}{d} \sum_{k,t=1}^{d} a_{k}^{-1}b_{t}^{-1}  \langle k|U|t\rangle \langle t|U^{\dagger}|k\rangle  \nonumber \\
     = & \frac{1}{d} tr(U\rho_{B}^{-1}U^{\dagger} \rho_{A}^{-1}).
   \end{align}
   The eigenvalues of $\rho_{A}^{-1}$ are $a_{d_{A}}^{-1} \ge \cdots \ge a_{1}^{-1} > a_{d_{A}+1}^{-1} = \cdots = a_{d}^{-1} = 0$. That of $U\rho_{B}^{-1}U^{\dagger}$ are $b_{d}^{-1} \ge \cdots \ge b_{1}^{-1} > 0$. By applying Lemma \ref{lm:lowerboundtraceOperatorMulti}, we get
   \begin{align}
      tr(U\rho_{B}^{-1}U^{\dagger} \rho_{A}^{-1}) \ge \sum_{k=1}^{d_{A}}\frac{1}{a_{k} b_{d_{A} + 1 - k}} \nonumber
   \end{align}
   where the equality holds when
   \begin{align}\label{eq:UPconditionU}
      U = (\sum_{k=1}^{d_{A}}|d_{A} + 1 -k\rangle \langle k|) \oplus I
   \end{align}
   where the term in the direct sum acts on the kernel space of $\rho_{A}$.

   Therefore, we get the upper bound of the the probability $ p_{i}$ as follows
   \begin{align}
      p_{i} \le \frac{d}{\sum_{k=1}^{d_{A}}\frac{1}{a_{k} b_{d_{A} + 1 - k}}}
   \end{align}
   where the equality holds when the measurement operator is $M_{i} = \sqrt{p_{max}} |\Omega_{U}\rangle \langle \Omega_{U}| \rho_{AB}^{-\frac{1}{2}} $ and $U$ takes the form defined in Eq.~(\ref{eq:UPconditionU}).
\end{proof}

\section{Conclusion and discussion}
In this paper, we have exploited the basic configuration of quantum repeaters in perspectives of both entanglement concentration rate and LOCC complexity. For the scenario with two different two-qubit pure states, we constructed a protocol to concentrate entanglement. The protocol is optimal in perspective entanglement concentration rate and efficient in perspective LOCC complexity. We also find a criteria for the projective measurement to achieve the optimal entanglement concentration rate. For the scenario where general pure states are prepared and general measurements are considered, we get the upper bound on the probability of a successful measurement outcome which produces a maximally entangled state between Alice and Bob without any further local operations.

The protocol is composed of two steps. Firstly, Clare performs a measurement operation. Secondly, based on Clare's measurement outcome, Bob chooses the corresponding strategy; namely, Bob does not do any further local operations or performs the corresponding general measurements. In general, the protocol can concentrate entanglement from the scenario with the optimal rate. We reduced the LOCC complexity via the strategy that no further local operation was needed if maximally entangled state could be created between Alice and Bob when a measurement outcome was observed by Clare.

For the scenario with different two-qubit states $|\Phi_{\theta}\rangle$ and $|\Phi_{\eta}\rangle$, we constructed a projective measurement such that maximally entangled state could be created after two of the four measurement outcomes. Such measurement outcomes could be observed with probability $\frac{\sin^{2}{2\theta} \sin^{2}{2\eta}} {2(1 - \cos^{2}{2\theta} \cos^{2}{2\eta})}$. If the states for the scenario are same, say $|\Phi_{\theta}\rangle$, we can construct a projective measurement with three successful projections to produce maximally entangled states without any further local operations. The corresponding probability is $\frac{\sin^{2}{2\theta}(3 + \cos^{2}{2\theta})} {4(1 + \cos^{2}{2\theta})}$.

\begin{acknowledgments}
    The author is delighted to thank Professor Yuan Feng for illuminating and fruitful discussions. This research is partially supported by Chinese Scholarship Council (Grant No. 201206270069), Australian Research Council (Grant No. DP160101652) and National Natural Science Foundation of China (Grant No. 61472452 and 61772565).
\end{acknowledgments}



\end{document}